\newcommand{\ket}[1]{|#1\rangle}
\title{Unary probabilistic and quantum automata on promise problems\thanks{The arXiv number is 1502.01462.}}
\author{Aida Gainutdinova \and Abuzer Yakary{\i}lmaz\inst{}$^,$\thanks{Yakary{\i}lmaz was partially supported by CAPES with grant 88881.030338/2013-01.}}
\institute{Kazan Federal University, Kazan, Russia       
       \and
National Laboratory for Scientific Computing, Petr\'{o}polis, RJ, 25651-075, Brazil
\email{aida.ksu@gmail.com,abuzer@lncc.br}
}
\authorrunning{A. Gainutdinova \and A. Yakary{\i}lmaz} 
\begin{document}

\maketitle

\begin{abstract}
We continue the systematic investigation of probabilistic and quantum finite automata (PFAs and QFAs) on promise problems by focusing on unary languages. We show that bounded-error QFAs are more powerful than PFAs. But, in contrary to the binary problems, the computational powers of Las-Vegas QFAs and bounded-error PFAs are equivalent to deterministic finite automata (DFAs).  Lastly, we present a new family of unary promise problems with two parameters such that when fixing one parameter QFAs can be exponentially more succinct than PFAs and when fixing the other parameter PFAs can be exponentially more succinct than DFAs.
\end{abstract}

\section{Introduction}

Promise problems are generalizations of language recognition. The aim is, instead of separating one language from its complement, to separate any two disjoint languages. That is, the input is promised to be from the union of these two languages. Promise problems have served some important roles in the computational complexity. For example, it is not known whether the class BPP (BQP), bounded error probabilistic (quantum) polynomial time, has a complete problem, but, the class PromiseBPP (PromiseBQP), defined on promise problems, has some complete problems (see the surveys by Goldreich \cite{Gol06A} and Watrous \cite{Wat09A}). 

In automata theory, the promise problems has also appeared in many different forms. For example, in 1989  Condon and Lipton \cite{CL89} defined a promised version of emptiness problem for probabilistic finite automata (PFAs), and showed its undecidability by using a promised version of equality language ($\mathtt{EQ}=\{a^nb^n | n > 0\}$), solved by two-way bounded-error PFAs, which was also used to show that there is a weak constant-space interactive proof system for any recursive enumerable language. 

On the other hand, up to our knowledge, some systematic works on promise problems in automata theory have been started only recently. An initial result was given to compare exact quantum and deterministic pushdown automata \cite{MNYW05}, the former one was shown to be more powerful (see also \cite{Nak15} and \cite{NY15A} for the results in this direction). Then, the result given by Ambainis and Yakary{\i}lmaz \cite{AY12}, the state advantages of exact quantum finite automata (QFAs) over deterministic finite automata (DFAs) cannot be bounded in the case of unary promise problems, has stimulated the topic and a series of papers appeared on the succinctness of QFAs and other models \cite{ZQGLM13,GQZ14A,GQZ14B,ZGQ14A,BMP14B,AGKY14}. In parallel, the new results were given on classical and quantum automata models \cite{RY14A,GY14}:
\begin{itemize}
	\item There is a promise problem solved by exact two-way QFAs but not by any sublogarithmic probabilistic Turing machine (PTM).
	\item There is a promise problem solved by an exact two-way QFA in quadratic expected time, but not by any bounded-error $ o(\log \log n) $-space PTMs in polynomial expected time.
	\item There is a promise problem solvable by a Las Vegas realtime QFA, but not by any bounded-error  PFA.
	\item The computational power of deterministic, nondeterministic, alternating, and Las Vegas PFAs are the same and two-wayness does not help.
	\item On the contrary to tight quadratic gap in the case of language recognition, Las-Vegas PFAs can be exponentially more state efficient than DFAs.
	\item The state advantages of bound-error unary PFAs over DFAs cannot be bounded.
	\item There is a binary promise problem solved by bounded-error PFAs but not by any DFAs.
\end{itemize}

In this paper, we provide some new results regarding probabilistic and quantum automata on unary promise problems. We show that bounded-error QFAs are more powerful than PFAs. But, on contrary to the binary problems, the computational power of Las-Vegas QFAs and bounded-error PFAs are equivalent to DFAs. Lastly, we present a new family of unary promise problems with two parameters such that when fixing one parameter QFAs can be exponentially more succinct than PFAs and when fixing the other parameter PFAs can be exponentially more succinct than DFAs.

\section{Preliminaries}
\label{sec:preliminaries}

In this section, we provide the necessary background to follow the remaining part. Firstly we give the definitions of models and the notion of promise problems. Then, we give the basics of Markov chain theory which will be used in some proofs. 

\subsection{Definitions}

A PFA $ \mathcal{P} $ is a 5-tuple
$
	\mathcal{P} = (Q,\Sigma,\{ A_\sigma \mid \sigma \in \Sigma \},v_0,Q_a),
$ where
\begin{itemize}
	\item $  Q $ is the set of states,
	\item $ \Sigma $ is the input alphabet,
	\item $ v_0 $ is a $ |Q| $-dimensional stochastic initial column vector that represents the initial probability distribution of the states at the beginning of the computation,
	\item $ A_\sigma $ is a (left) stochastic transition matrix for symbol $ \sigma \in \Sigma $ where $ A_\sigma(j,i) $ represents the probability of going from the $i$th state to the $j$th state after reading $ \sigma $, and
	\item $ Q_a $ is the set of the accepting states.
\end{itemize}
The computation of $ \mathcal{P} $ on the input $ w \in \Sigma^* $ can be traced by a stochastic column vector, i.e.
\[
	v_j = A_{w_j} v_{j-1},
\]
where $ 1 \leq j \leq |w| $. After reading the whole input, the final state is $ v_{|w|} $. Based on this, we can calculate the accepting probability of $ w $ by $ \mathcal{P} $, denoted $ f_{\mathcal{P}}(w) $, as follows:
\[
	 f_{\mathcal{P}}(w) = \sum_{q_j \in Q_a} v_{|w|}(j).
\]
If all stochastic elements of a PFA are restricted to have only 0s and 1s, then we obtain DFA that starts in a certain state and switches to only one state in each step, and so the computation ends in only a single state. An input is accepted by a DFA if the final state is an accepting state.

There are different kinds of quantum finite automata (QFAs) models in the literature. The general ones (e.g. \cite{Hir10,YS11A,AG05}) can exactly simulate PFAs (see \cite{SayY14} for a pedagogical proof). In this paper, we present our results based on the known simplest QFA model, called Moore-Crutcfield QFA \cite{MC00}. Therefore, we only provide its definition. We assume the reader knows the basics of quantum computation (see \cite{SayY14} for a quick review and \cite{NC00} for a complete reference).

A MCQFA $\mathcal{M}$ is 5-tuple
$
	\mathcal{M} = (Q,\Sigma,\{ U_\sigma \mid \sigma \in \Sigma \},\ket{v_0},Q_a)
$
where, different from a PFA,
\begin{itemize}
	\item $ \ket{v_0} $ is a norm-1 complex-valued column initial vector that can be a superposition of states and represents the initial quantum state of $ \mathcal{M} $ at the beginning of the computation, and,
	\item $ U_\sigma $ is a unitary transition matrix for symbol $ \sigma \in \Sigma $ where $ U_\sigma(j,i) $ represents the amplitude of going from the $i$th state to the $j$th state after reading $ \sigma $. 
\end{itemize}
Traditionally, vectors are represented with ``ket'' notation ($\ket{\cdot}$) in quantum mechanics and computations. The computation of $ \mathcal{M} $ on the input $ w \in \Sigma^* $ can be traced by a norm-1 complex-valued column vector, i.e.
\[
	\ket{v_j} = U_{w_j} \ket{v_{j-1}},
\]
where $ 1 \leq j \leq |w| $. After reading the whole input, the final quantum state is $\ket{v_{|w|}} $. Based on this, a measurement operator is applied to see whether the automaton in an accepting or non-accepting state.  The accepting probability of $ w $ by $ \mathcal{M} $ is calculated as:
\[
	 f_{\mathcal{M}}(w) = \sum_{q_j \in Q_a} | \ket{v_{|w|}}(j) |^2.
\]

A Las Vegas PFA (or QFA) never gives a wrong decision, instead giving the decision of ``don't know''. Formally, its set of states is divided into three disjoint sets, the set of accepting states ($ Q_a $), the set of neutral states ($ Q_n $), and the set of rejecting states ($ Q_r = Q \setminus Q_a \cup Q_n $). At the end of the computation, the decision of ``don't know'' is given if the automaton ends with an neutral state. The probability of giving the decision of ``don't know''  (rejection) is calculated similar to the accepting probability by using $ Q_n $ ($Q_r$) instead of $Q_a$. 

A promise problem $ \mathtt{P} \subset \Sigma^* $ is composed by two disjoint languages $ \mathtt{P_{yes}} $ and $ \mathtt{P_{no}} $, where the former one is called a set of yes-instances and the latter one is called a set of no-instances. 

A promise problem is said to be solved by a DFA if any yes-instance is accepted and any no-instance is rejected. A promise problem is said to be solved by a PFA or QFA with error bound $ \epsilon < \frac{1}{2} $ if any yes-instance is accepted with probability at least $ 1 - \epsilon $ and any no-instance is rejected with probability at least $ 1- \epsilon $. If all yes-instances are accepted exactly, then it is said the promise problem is solved with one-sided bounded error. In this case, the error bound can be greater than $ \frac{1}{2} $ but it must be less than 1, i.e. $ \epsilon <1 $.  Lastly, a promise problem is said to be solved by a Las Vegas PFA or QFA with success probability $ p>0 $,
\begin{itemize}
	\item if any yes-instance is accepted with probability at least $ p $ and it is rejected with probability 0, and,
	\item if any no-instance is rejected with probability at least $ p $ and it is accepted with probability 0.
\end{itemize}
In the case of promise problems, we do not care about the decisions on the strings from $ \Sigma^* \setminus \mathtt{P} $.


\subsection{The theory of Markov Chains}
\label{Markov_chain} 

The computation of a unary PFA can be described by a Markov chain. Here we present some basic facts and results from theory of Markov chains that will be used in some proofs. We refer the reader to \cite{ks60} for more details and \cite{AF98} and \cite{MPP01B} for some similar applications.
 
The states of a Markov chain are divided into ergodic and transient states. An {\em ergodic set of states} is a set which a process cannot leave once it has entered, a {\em transient set of states} is a set which a process can leave, but cannot return once it has left. An arbitrary Markov chain has at least one ergodic set.  If a Markov chain $C$ has more than one ergodic set, then there is absolutely no interaction between these sets. Hence we have two or more unrelated Markov chains lumped together and can be studied separately. If a Markov chain consists of a single ergodic set, then the chain is called an {\em ergodic chain}. According to the classification mentioned above, every ergodic chain is either regular or cyclic (see below).

If an ergodic chain is regular, then for sufficiently high powers of the state transition matrix, $M$ has only positive elements. Thus, no matter where the process starts, after a sufficiently large number of steps it can be in any state. Moreover, there is a limiting vector of probabilities of being in the states of the chain, that does not depend on the initial state. 

If a Markov chain is cyclic, then the chain has a period $t$ and all its states are subdivided into $t$ cyclic subsets $(t>1)$. For a given starting state a process moves through the cyclic subsets in a definite order, returning to the subset with the starting state after every $t$ steps. It is known that after sufficient time has elapsed, the process can be in any state of the cyclic subset appropriate for the moment. Hence, for each of  $t$ cyclic subsets the $t$-th power of the state transition matrix $M^t$ describes a regular Markov chain. Moreover, if an ergodic chain is a cyclic chain with the period $t$, it has at least $t$ states.

Let $C_1, \dots, C_l$ be cyclic subsets of states of Markov chain with periods $t_1, \dots, t_l$, respectively, and $D$ be the least common multiple of $t_1, \dots, t_l$. 
For each cyclic subset $C$ after every $ D $ steps, the process can be  in any state of $C$ and the $D$th power of $M$ describes a regular  Markov chain for this subset.
From the theory of Markov chains it is known  that there exists an $\alpha_{acc}$ such that $\lim_{r \to \infty}\alpha^{r \cdot D}_{acc}=\alpha_{acc}$, where $\alpha_{acc}^i$ represents the probability of process being in accepting state(s) after the $i$th step. Hence, for any $\delta>0$, there exists an $ r_0>0 $ such that  \[|\alpha^{r \cdot D}_{acc}-\alpha^{r' \cdot  D}_{acc}|<\delta\]for any $r, r' > r_0$.

Moreover, since $ \alpha_{acc}^{r \cdot D} $ has a limit point $ \alpha_{acc} $, each $ \alpha_{acc}^{r \cdot D+j} $ has also a limit point, say $ \alpha_{acc(j)} $ for any $ j \in \{ 1,\ldots,D-1 \} $. 


\section{The computational power of unary PFAs and QFAs}

First we show that any unary promise problem solved by a QFA exactly (without error) can also be solved by DFAs. 

\begin{theorem}
	If a unary promise problem $ \sf P = (P_{yes},P_{no}) $ is solved by a QFA exactly, then it is also solved by a DFA.
\end{theorem}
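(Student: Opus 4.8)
The plan is to reduce the statement to a combinatorial separation question and then exploit the unitarity of the single transition matrix. First I would recall the standard fact that a unary language is recognized by a DFA if and only if its set of lengths is ultimately periodic (a finite set of exceptions followed by a periodic tail). Hence, to build a DFA solving $\mathsf{P}$ it suffices to produce an ultimately periodic set $S \subseteq \mathbb{N}$ with $P_{yes} \subseteq \{a^n : n \in S\}$ and $\{a^n : n \in S\} \cap P_{no} = \emptyset$: a DFA consisting of a tail followed by a cycle then accepts exactly $\{a^n : n \in S\}$ and solves $\mathsf{P}$.

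Next I would set up the dynamics. Let $U = U_a$ be the (unitary) transition matrix, let $P$ be the orthogonal projection onto the accepting subspace $V_a = \mathrm{span}(Q_a)$, and write $g(n) = f_{\mathcal{M}}(a^n) = \| P U^n \ket{v_0} \|^2$. Exactness means $g(n) = 1$ whenever $a^n \in P_{yes}$ and $g(n) = 0$ whenever $a^n \in P_{no}$; equivalently $U^n\ket{v_0} \in V_a$ for yes-instances and $U^n\ket{v_0} \in V_a^\perp$ for no-instances. Consequently it is enough to separate the two disjoint sets $Y = \{ n : g(n) = 1 \} \supseteq \{ n : a^n \in P_{yes}\}$ and $N = \{ n : g(n) = 0 \} \supseteq \{ n : a^n \in P_{no}\}$, and for this it suffices to show that $Y$ itself is ultimately periodic (then take $S = Y$). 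Diagonalizing the unitary $U$ with eigenvalues $e^{2\pi i\theta_1},\dots,e^{2\pi i\theta_d}$ gives $g(n) = \sum_{j,k} \alpha_{jk}\, e^{2\pi i(\theta_k-\theta_j)n}$, a finite exponential sum, i.e.\ a linear recurrence sequence whose characteristic roots are the numbers $e^{2\pi i(\theta_k-\theta_j)}$.

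The main obstacle is irrational eigenvalue phases: if all $\theta_j$ are rational then $U^T = I$ for a suitable $T$, so $g$ is genuinely $T$-periodic and $Y$ is periodic outright, the easy case; but with an irrational phase $U^n$ is only almost periodic and $g$ is not periodic. The resolution I would use is that $Y$ is exactly the set where the linear recurrence sequence $1 - g(n)$ vanishes, and by the Skolem--Mahler--Lech theorem (valid over $\mathbb{C}$, a field of characteristic zero) the vanishing set of a linear recurrence is a finite set together with finitely many full arithmetic progressions, hence ultimately periodic. A more self-contained variant avoiding this black box: split $\ket{v_0}$ into its root-of-unity and irrational eigenvalue components, fix a residue $r$ modulo the least common multiple $T$ of the orders of the root-of-unity eigenvalues, and note that along $n = r + Tk$ the rational part is frozen, so the exact condition $U^n\ket{v_0} \in V_a$ becomes, coordinatewise, an exponential sum in $k$ meeting a fixed value; such exact-hitting events form an ultimately periodic set of $k$, and intersecting over coordinates and uniting over residues keeps $Y$ ultimately periodic. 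I expect the delicate point to be precisely this claim—that an almost-periodic unitary orbit lands \emph{exactly} on a fixed subspace only along an ultimately periodic set of steps—since the naive idea of taking an almost-period $T$ with $\|U^T - I\|$ small fails: powers of a non-periodic $U^T$ drift away from the identity, so one really needs the rigidity supplied by the $\{0,1\}$-valuedness together with Skolem--Mahler--Lech (or the coordinatewise exponential-sum analysis).
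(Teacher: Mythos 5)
Your proof is correct, but it takes a genuinely different route from the paper's. The paper's proof never touches the spectral structure of the automaton: it observes that the exact QFA $\mathcal{M}$ also defines a language $\mathtt{L}$ with cutpoint $0$ (equivalently, the language of a nondeterministic QFA), that $\mathtt{P_{yes}} \subseteq \mathtt{L}$ and $\mathtt{P_{no}} \subseteq \overline{\mathtt{L}}$, and then cites as a black box that unary languages defined by QFAs with cutpoint $0$ are exclusive languages and that unary exclusive languages are regular (page 89 of the paper's reference [SS78]); the DFA recognizing $\mathtt{L}$ then solves the promise problem. You instead separate the instances with the set $Y=\{n : g(n)=1\}$ rather than $\{n : g(n)>0\}$, diagonalize the unitary to write $1-g(n)$ as a finite exponential sum, i.e.\ a linear recurrence sequence over $\mathbb{C}$, and invoke Skolem--Mahler--Lech to conclude that $Y$ is ultimately periodic, hence the length set of a unary regular language. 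The two arguments share their mathematical core---the regularity of unary exclusive languages that the paper cites is itself a Skolem-type theorem on zero sets of rational sequences---so your proof essentially unpacks the paper's citation: what you gain is a self-contained argument (modulo SML as a named theorem) that makes the underlying rigidity explicit; what the paper gains is brevity and the fact that the cited nondeterministic-QFA results apply verbatim to more general QFA models whose evolution is not a single unitary (though SML would still cover those, since acceptance probabilities of unary automata remain linear recurrences even without diagonalizability). One caveat: your ``more self-contained variant'' at the end is not actually a proof---the assertion that the exact-hitting events along $n=r+Tk$ form an ultimately periodic set of $k$ is precisely the content of SML for unit-modulus characteristic roots, so that paragraph is circular---but you flag the delicate point yourself, and your primary SML-based argument is complete and correct.
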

\begin{proof}
	Let $ \cal M $ be the a QFA solving $\sf P$ exactly. The automaton $ \cal M $ also defines a language with cutpoint 0, say $\tt  L $, i.e. any string accepted with a non-zero (zero) probability is a member (non-member). Then, we can easily obtain the following two facts:
\begin{itemize}
	\item Since each yes-instance of  $\tt P$ is accepted with probability 1, it is also a member of $ \tt L $. Thus, $ \tt P_{yes} $ is a subset of $ \tt L $.
	\item Since each no-instance of $ \tt P $ is accepted with probability 0, it is also a member of $ \tt \overline{L} $. Thus, $ \tt P_{no} $ is a subset of $  \tt \overline{L} $.
\end{itemize}	
	Any unary language defined by a QFA with cutpoint 0 (or equivalently recognized by a nondeterministic QFA \cite{YS10A}) is a unary exclusive language and it is known that any such language is regular (Page 89 of \cite{SS78}). Thus, $ \tt L $ is a unary regular language and there is a DFA, say $\cal D$, recognizing $ \tt L $. So, $ \cal D $ can also solve promise problem $ \tt P $: $ \cal D $ accepts all members of $ \tt L $ including all $ \tt P_{yes} $ and it rejects all members of $ \tt \overline{L} $ including all $ \tt P_{no} $.
\qed\end{proof}

We can extend this result also for Las Vegas QFAs. 

\begin{theorem}
	If a unary promise problem $ \tt P = (P_{yes},P_{no}) $ is solvable by a Las Vegas QFA  with a success probability $ p>0 $, then it is also solvable by a DFA.
\end{theorem}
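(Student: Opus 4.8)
The plan is to reduce the Las Vegas case to the exact case already handled in the previous theorem. A Las Vegas QFA $\mathcal{M}$ with success probability $p>0$ has its states partitioned into accepting, neutral, and rejecting sets $Q_a, Q_n, Q_r$. The defining feature is one-sided correctness: every yes-instance is rejected with probability $0$, and every no-instance is accepted with probability $0$. So my first step is to extract from this a pair of cutpoint-$0$ language conditions analogous to those in Theorem~1. Specifically, for a yes-instance the rejecting probability is exactly $0$ (while the accepting probability is at least $p>0$), and for a no-instance the accepting probability is exactly $0$ (while the rejecting probability is at least $p>0$).

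Next I would build the appropriate ``accepting'' language with cutpoint $0$. The idea is to regard $\mathcal{M}$ as a nondeterministic-style device that accepts a string precisely when its accepting probability is strictly positive. Call this language $\tt L$, so $w \in \tt L$ iff $f_{\mathcal{M}}(w) > 0$. Then every yes-instance lies in $\tt L$ since it is accepted with probability at least $p > 0$, and every no-instance lies in $\tt \overline{L}$ since it is accepted with probability exactly $0$. This is exactly the same containment structure as in the proof of Theorem~1: $\tt P_{yes} \subseteq \tt L$ and $\tt P_{no} \subseteq \tt \overline{L}$.

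Once this containment is set up, I would invoke the same machinery used in the exact case. The language $\tt L$ is a unary language defined by a QFA with cutpoint $0$, hence a unary exclusive language, and by the cited result (Page~89 of \cite{SS78}, via the nondeterministic-QFA characterization of \cite{YS10A}) every such unary language is regular. Therefore there is a DFA $\cal D$ recognizing $\tt L$, and $\cal D$ solves the promise problem $\tt P$: it accepts all of $\tt P_{yes} \subseteq \tt L$ and rejects all of $\tt P_{no} \subseteq \tt \overline{L}$.

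The main conceptual obstacle, and the only place where I must be careful, is the choice of which probability to threshold at cutpoint $0$. For the exact case a single accepting-probability condition sufficed because yes- and no-instances were cleanly separated by accepting probability $1$ versus $0$. In the Las Vegas case I should double-check that the \emph{accepting} probability (rather than the rejecting probability) gives the clean $>0$ versus $=0$ dichotomy across $\tt P_{yes}$ and $\tt P_{no}$ — which it does, precisely because no-instances are accepted with probability $0$ while yes-instances are accepted with positive probability. Once that observation is pinned down, the regularity argument is identical to Theorem~1 and the rest is routine.
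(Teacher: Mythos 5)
Your proof is correct and follows essentially the same route as the paper: both reduce the Las Vegas case to the cutpoint-0 (nondeterministic QFA) argument of the preceding theorem, using that yes-instances have accepting probability at least $p>0$ while no-instances have accepting probability exactly $0$, and then invoke regularity of unary exclusive languages. The paper phrases this via an explicitly modified automaton $\mathcal{M}'$ that reclassifies neutral states as rejecting, but since this does not change the accepting probabilities, your direct use of $f_{\mathcal{M}}$ is the same argument.
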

\begin{proof}
	Let $ \cal M $ be our Las Vegas QFA solving $\tt P$ with success probability $ p>0 $. We can obtain a new QFA $ \cal M' $ by modifying $ \cal M $ as follows:  $ \cal M' $ rejects the input when entering a neutral state at the end of the computation. Then, any member of $ \tt P_{yes} $ is accepted by $ \cal M$ with probability at least $ p $ and any member of $ \tt P_{no} $ is accepted by $ \cal M' $ with probability 0. After this, we can consider $ \cal M'$ as a  nondeterministic QFA and follow the same reasoning given in the previous proof.
\qed\end{proof}

Since Las Vegas QFAs and DFAs define the same class of unary promise problems, one may ask how much state efficient QFAs can be over DFAs. Due to the result of Ambainis and Yakary{\i}lmaz \cite{AY12}, we know that the gap (on unary promise problems) cannot be bounded. (Note that, in the case of language recognition, there is no gap between exact QFA and DFA \cite{Kla00} and the gap can be at most exponential between bounded-error QFAs and DFAs (see e.g. \cite{AG00}).) On the other hand, as mentioned before, over binary promise problems, Las Vegas QFAs are known to be more powerful than bounded-error PFAs \cite{RY14A}.  An open question here is whether exact QFAs can solve a binary promise problem that is beyond the capabilities of DFAs.

Las Vegas PFAs and DFAs have the same computational power even on binary promise problems and the tight gap on the number of states is exponential \cite{GY14}. Currently we do not know whether this bound can be improved on unary case and we leave it as a future work. Here we show that making two-sided errors does not help to solve a unary promise problem that is beyond of the capability of DFAs. However, remark that, the state efficiency of bounded-error unary PFAs over unary DFAs also cannot be bounded \cite{GY14}. 

\begin{theorem}
	If a unary promise problem $ \tt P = (P_{yes},P_{no}) $ is solved by a PFA, say $\mathcal{P}$, with error bound $ \epsilon < \frac{1}{2} $, then it is also solvable by a DFA.
\end{theorem}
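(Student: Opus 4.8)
The plan is to use the facts about Markov chains recalled in Section~\ref{Markov_chain}. Since $\mathcal{P}$ is unary, its run on input $a^n$ is governed by iterating the single stochastic matrix $M = A_a$ on the initial distribution $v_0$, and the probability that $\mathcal{P}$ accepts $a^n$ is exactly $\alpha_{acc}^n$. By the cited facts there is a period $D$ such that, for every residue $j \in \{0,1,\dots,D-1\}$, the subsequence $\alpha_{acc}^{rD+j}$ converges as $r \to \infty$ to a value $\alpha_{acc(j)}$. Thus the acceptance probability along each arithmetic progression $n \equiv j \pmod D$ settles down to a single limit, and the whole argument reduces to comparing each $\alpha_{acc(j)}$ with $\tfrac12$.

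The separating quantity is $\eta = \tfrac12 - \epsilon > 0$. By the promise, every yes-instance is accepted with probability at least $1-\epsilon = \tfrac12 + \eta$, and every no-instance with probability at most $\epsilon = \tfrac12 - \eta$. First I would fix, for each $j$, a bound $r_j$ with $|\alpha_{acc}^{rD+j} - \alpha_{acc(j)}| < \eta/2$ for all $r \ge r_j$, and let $N$ bound the corresponding lengths over all $j$. Then for any promise instance $a^n$ with $n > N$ and $n \equiv j \pmod D$: if it is a yes-instance, then $\alpha_{acc}^n \ge \tfrac12 + \eta$ forces $\alpha_{acc(j)} > \tfrac12$; if it is a no-instance, then $\alpha_{acc}^n \le \tfrac12 - \eta$ forces $\alpha_{acc(j)} < \tfrac12$. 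Since these conclusions are incompatible, the sign of $\alpha_{acc(j)} - \tfrac12$ alone dictates the correct answer for every long promise instance of class $j$, and classes with $\alpha_{acc(j)} = \tfrac12$ simply contain no long promise instance at all, so their decision is immaterial.

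It remains to package this as a DFA. I would take the target language to be
\[
\mathtt{L} = \{\, a^n : n \le N,\ a^n \in \mathtt{P_{yes}} \,\}\ \cup\ \{\, a^n : n > N,\ \alpha_{acc(n \bmod D)} > \tfrac12 \,\}.
\]
By the previous paragraph $\mathtt{P_{yes}} \subseteq \mathtt{L}$ and $\mathtt{P_{no}} \cap \mathtt{L} = \emptyset$, so any machine recognizing $\mathtt{L}$ solves $\mathtt{P}$ (the decisions on the finitely many non-promise short strings and on the don't-care long classes are free). The set $\mathtt{L}$ is ultimately periodic: membership is hard-coded for the finitely many lengths $n \le N$ and depends only on $n \bmod D$ thereafter. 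Such a unary language is regular, recognized by the standard DFA made of a tail of $N+1$ states feeding into a cycle of $D$ states, with the accepting states chosen according to $\mathtt{L}$. This gives the desired DFA.

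I expect the delicate part to be the uniform threshold argument: the convergence $\alpha_{acc}^{rD+j} \to \alpha_{acc(j)}$ is the engine, and the choice $\eta = \tfrac12 - \epsilon$ is exactly what collapses the three a priori possibilities ($\alpha_{acc(j)}$ above, below, or equal to $\tfrac12$) into one clean accept/reject rule. The boundary case $\alpha_{acc(j)} = \tfrac12$ looks dangerous, but the two-sided gap $\eta$ excludes any long promise instance from such a class, so it causes no trouble.
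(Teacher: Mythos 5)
Your proposal is correct and follows essentially the same route as the paper's proof: model the unary PFA as a Markov chain, use the period $D$ and the limiting acceptance probabilities $\alpha_{acc(j)}$ along each residue class, separate them from $\tfrac12$ via the gap $1-2\epsilon$, and build a tail-plus-cycle DFA (finite prefix hard-coded, cycle of length $D$ with accepting states chosen by comparing $\alpha_{acc(j)}$ to $\tfrac12$). Your explicit treatment of the boundary case $\alpha_{acc(j)} = \tfrac12$ (showing such a class contains no long promise instance) is a slightly cleaner packaging of the same classification the paper does with its $\delta'$ argument.
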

\begin{proof}
	The computation of $ \mathcal{P} $ can be modelled as a Markov chain. Let $  \mathcal{P} $ has $ n $ states and $ D $ be the least common multiple of periods of cycles of Markov chain (see the Section \ref{Markov_chain}). So, $ \mathcal{P} $ has $ D $ limiting accepting probabilities as described in Section \ref{Markov_chain}, say
	\[
		\alpha_{acc(0)}, \alpha_{acc(1)}, \ldots, \alpha_{acc(D-1)}.
	\]
	For any small $ \delta >0 $, there is an integer $ r_0 $ such that,  for each $ j \in \{0,\ldots,D-1\} $, we have the inequality $ | f_{\mathcal{M}}(a^{r\cdot D + j}) - \alpha_{acc(j)} | < \delta $ for all $ r \geq r_0 $. Let's pick a $ \delta' >0 $ such that, for any index $ i \in \{0,\ldots,D-1\} $, the interval $ | \alpha_{acc(i)} -\delta'  | $ does contain at most one of the points $ 1 - \epsilon $ and $ \epsilon $, which is always possible since the gap between these two points ($1-2\epsilon$) is non-zero. For this $ \delta' $, we also have a $ r'_0 $, such that, for any  $ j \in \{0,\ldots,D-1\} $, $ f_{\mathcal{M}}(a^{r\cdot D + j}) $ is in the interval $ | \alpha_{acc(j)} -\delta'  | $ for all $ r \geq r'_0 $. 
	
	We can classify $ \alpha_{acc}(j) $ as follows:
	\begin{itemize}
		\item It is at least $ \frac{1}{2} $. Then,  $ f_{\cal M} (a^{r\cdot D + j}) $ cannot be $ \epsilon $ or less than $ \epsilon $ for any $r \geq r_0$. 
		\item It is less than $ \frac{1}{2} $. Then, $ f_{\cal M} (a^{r\cdot D + j}) $ cannot be $ 1- \epsilon $ or greater than $ 1- \epsilon $ for any $ r \geq r_0 $. 
	\end{itemize}
	Thus, a $D $-state cyclic DFA with the following state transitions
	\[
		q_0 \rightarrow q_1 \rightarrow \cdots \rightarrow q_j \rightarrow \cdots \rightarrow q_{D-1} \rightarrow q_0
	\] can easily follow the periodicity of $  \cal P $. Moreover, if $ \alpha_{acc}(j) $ belongs the first (second) class of the above, then $ q_j $ is an accepting (a non-accepting) state. Thus, our cyclic DFA can 	give the same decisions of $ \cal P $ on the promised strings with length at least $ r_0 \cdot D $. The remaining (and shorter) promised strings form a finite set and a DFA with $ (r_0' \cdot D -1) $ states can give appropriate decisions on them. Therefore, by combining two DFAs, we can get a DFA with $ r_0' \cdot D +D $ states that solves the promise problem $ \tt P $.
\qed\end{proof}

Now we show that unary QFAs can define more promise problems than PFAs when the machines can err. We present our quantum result by a 2-state MCQFA. Then, we give our impossibility result for unary PFAs. 

Let $ \varphi $ be a rotation angle which is an irrational fraction of $2 \pi$.  For any $\theta \in (0,\frac{\pi}{4})$, we define a unary promise problem $\tt L^{\theta}=\{L^{\theta}_{yes},L^{\theta}_{no}\}$ as
\begin{itemize}
	\item $ \mathtt{L^{\theta}_{yes}}=\{a^k \mid k\varphi \in [l\pi -\theta, l\pi +\theta] \mbox{ for some } l \geq 0\},$
	\item $ \mathtt{L^{\theta}_{no}}=\{a^k \mid k\varphi \in  [l\pi +\frac{\pi}{2} -\theta, l\pi  + \frac{\pi}{2} +\theta] \mbox{ for some } l \geq 0\}.$
\end{itemize}

\begin{theorem} 
	\label{thm:2-state-QFA}
	There is a 2-state MCQFA $\cal M$ solving the promise problem $\tt L^{\theta}$ with error bound $ \sin^2 \theta < \frac{1}{2} $. Moreover, $ \cal M $ is defined only with real number transitions.
\end{theorem}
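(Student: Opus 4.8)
The plan is to realize the two states of $\mathcal{M}$ as the two coordinate axes of a real plane and let the input symbol rotate the state vector by the fixed angle $\varphi$. Concretely, I would take $Q=\{q_0,q_1\}$, set $\ket{v_0}=\ket{q_0}$ (the first basis vector), choose $Q_a=\{q_0\}$, and define the single transition matrix
\[
 U_a=\begin{pmatrix} \cos\varphi & -\sin\varphi \\ \sin\varphi & \cos\varphi \end{pmatrix}.
\]
This is a planar rotation, hence orthogonal, i.e.\ a real unitary, so $\mathcal{M}$ is a legitimate MCQFA defined entirely over the reals; this settles the ``moreover'' clause immediately.

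Next I would trace the computation. Since $U_a$ is a rotation by $\varphi$, applying it $k$ times rotates by $k\varphi$, giving
\[
 \ket{v_k}=U_a^k\ket{v_0}=\begin{pmatrix}\cos(k\varphi)\\ \sin(k\varphi)\end{pmatrix}.
\]
Because $q_0$ is the only accepting state, the accepting probability on $a^k$ is $f_{\mathcal{M}}(a^k)=|\cos(k\varphi)|^2=\cos^2(k\varphi)$. The entire analysis therefore reduces to controlling $\cos^2(k\varphi)$ on the two instance families defined by the promise.

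The final step is the two bound checks, each a one-line trigonometric estimate. For a yes-instance write $k\varphi=l\pi+\beta$ with $|\beta|\le\theta$; then $\cos^2(k\varphi)=\cos^2\beta\ge\cos^2\theta=1-\sin^2\theta$, so the string is accepted with probability at least $1-\sin^2\theta$. For a no-instance write $k\varphi=l\pi+\tfrac{\pi}{2}+\beta$ with $|\beta|\le\theta$; then $\cos^2(k\varphi)=\sin^2\beta\le\sin^2\theta$, so the string is accepted with probability at most $\sin^2\theta$, i.e.\ rejected with probability at least $1-\sin^2\theta$. Both instance types thus meet the error bound $\epsilon=\sin^2\theta$, and since $\theta\in(0,\tfrac{\pi}{4})$ we get $\sin^2\theta<\sin^2\tfrac{\pi}{4}=\tfrac12$, which is exactly the admissible range for a bounded-error QFA.

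I do not expect a genuine obstacle here: the difficulty is conceptual rather than technical, namely recognizing that the promise's dichotomy --- proximity to a multiple of $\pi$ versus proximity to an odd multiple of $\tfrac{\pi}{2}$ --- is precisely the separation that a single rotation realizes through its $\cos^2$ acceptance profile. It is worth remarking that the irrationality of $\varphi/(2\pi)$ plays no role in this direction, since the construction works for any $\varphi$; it is needed only to guarantee that both instance sets are infinite and to defeat the PFA in the companion impossibility result.
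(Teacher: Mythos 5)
Your proposal is correct and follows essentially the same route as the paper: the identical 2-state rotation-by-$\varphi$ construction with a single accepting state, acceptance probability $\cos^2(k\varphi)$, and the same two trigonometric bound checks giving error at most $\sin^2\theta < \frac{1}{2}$. Your explicit parametrization $k\varphi = l\pi + \beta$ and the closing remark that irrationality of $\varphi/(2\pi)$ is irrelevant here are small clarifying touches, but the argument is the paper's argument.
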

\begin{proof}
	Let $ \{q_1,q_2\} $ be the set of states of $ \cal M $ and $ q_1 $ be the initial and the only accepting state. The unitary operation is a rotation on $ \ket{q_1}-\ket{q_2} $ plane with the angle $\varphi$. (Note that, there are infinitely many $\varphi$ whose rotation matrices contain only rational numbers, e.g. $ \arcsin \frac{3}{5}, \arcsin \frac{5}{13}, \arcsin \frac{7}{25} $, etc.). It is straightforward that, after reading $a^k$, the final quantum state becomes
	\[
		\ket{v_k} = \cos(k\varphi) \ket{q_1} + 	\sin(k\varphi) \ket{q_2},
	\] 
	and so $a^k$ is accepted by $ M $ with probability  $\cos^2 ( k\varphi ) $. It is clear that $ \mathcal{M} $ takes $ a^k $ and leaves it as $ \ket{v_k} $ before the measurement, which can be seen as a map from an angle to a point on the unit circle. Therefore, the bounds on $ k\varphi $ give similar bounds on $ \ket{v_k} $, that allows $ \cal M $ to solve the problem with bounded error. Now, we show that $\sin^2(\theta) < \frac{1}{2} $ can be a bound on the error.

	If $ a^k $ is a yes-instance, we have $ \cos \theta \leq | \cos (k\varphi) | \leq 1 $. Then, the accepting probability can be bounded as $ \cos^2 \theta \leq \cos^2 (k\varphi) \leq 1 $. That is, any yes-instance is accepted with probability at least $ \cos^2(\theta) $, which is equal to $ 1 - \sin^2 (\theta) $. In other word, the error for yes-instances can be at most $ \sin^2 \theta $.
	
	If $ a^k $ is a no-instance, $ 0 \leq | \cos (k\varphi) | \leq \sin \theta $. Then, the accepting probability can be bounded as $ 0 \leq \cos^2 (k\varphi) \leq  \sin^2 \theta  $. That is, any no-instance is accepted with probability at most $ \sin^2 \theta $, i.e. the error can be at most $ \sin^2 \theta  $ for any no-instance.
\qed\end{proof}

\begin{theorem} 
	\label{thm:no-pfa}
	There exists no PFA solving the promise problem $\tt L^{\theta}$ for any error bound $ \epsilon < \frac{1}{2} $.
\end{theorem}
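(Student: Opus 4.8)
\emph{The plan} is to argue by contradiction, exploiting the fact that a unary PFA becomes eventually periodic (in the limiting sense) while the promise problem $\mathtt{L}^{\theta}$ is \emph{not} eventually periodic along any arithmetic progression. Suppose some PFA $\mathcal{P}$ solves $\mathtt{L}^{\theta}$ with error bound $\epsilon < \frac12$. Exactly as in the proof of the previous theorem, I would model $\mathcal{P}$ as a Markov chain, let $D$ be the least common multiple of the periods of its cyclic subsets, and let $\alpha_{acc(0)},\dots,\alpha_{acc(D-1)}$ be the $D$ limiting accepting probabilities, so that $f_{\mathcal{P}}(a^{rD+j}) \to \alpha_{acc(j)}$ as $r\to\infty$ for each fixed residue $j \in \{0,\dots,D-1\}$.

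The key number-theoretic step is to show that \emph{every} residue class modulo $D$ contains arbitrarily long yes-instances and arbitrarily long no-instances. Fix $j$ and consider the sequence of angles $(rD+j)\varphi \bmod \pi$ for $r = 0,1,2,\dots$. Since $\varphi$ is an irrational fraction of $2\pi$, the ratio $\varphi/\pi$ is irrational, hence so is $D\varphi/\pi$; by Weyl's equidistribution theorem (or merely Kronecker's density theorem) the sequence $\{\, (rD+j)\varphi/\pi \bmod 1 \,\}_{r}$ is dense in $[0,1)$. Consequently, infinitely many $r$ make $(rD+j)\varphi \bmod \pi$ fall within $\theta$ of $0$ (or of $\pi$), giving yes-instances $a^{rD+j}$, and infinitely many other $r$ make it fall within $\theta$ of $\frac{\pi}{2}$, giving no-instances $a^{rD+j}$.

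Now fix a single residue $j$ and choose $\delta>0$ with $\delta < \frac12 - \epsilon$, together with the corresponding threshold $r_0$ supplied by Section~\ref{Markov_chain}, so that $|f_{\mathcal{P}}(a^{rD+j}) - \alpha_{acc(j)}| < \delta$ for all $r \geq r_0$. Picking a yes-instance $a^{rD+j}$ with $r \geq r_0$ forces $f_{\mathcal{P}}(a^{rD+j}) \geq 1-\epsilon$, whence $\alpha_{acc(j)} > 1-\epsilon-\delta > \frac12$; picking a no-instance $a^{r'D+j}$ with $r' \geq r_0$ forces $f_{\mathcal{P}}(a^{r'D+j}) \leq \epsilon$, whence $\alpha_{acc(j)} < \epsilon+\delta < \frac12$. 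These two conclusions about the single number $\alpha_{acc(j)}$ are contradictory, so no such $\mathcal{P}$ can exist.

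The only real obstacle is the density claim of the second paragraph: one must make sure that restricting attention to a single arithmetic progression $rD+j$ does not destroy the spread of the angles modulo $\pi$. This is exactly what irrationality of $\varphi/\pi$ guarantees, since the common difference $D\varphi$ remains an irrational multiple of $\pi$, and it is the crux that separates $\mathtt{L}^{\theta}$ from an eventually-periodic (hence DFA-solvable, and a fortiori PFA-solvable) problem. Everything else is a routine limit argument bolted onto the Markov-chain analysis already set up for the previous theorem.
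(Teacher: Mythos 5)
Your proposal is correct and follows essentially the same route as the paper's proof: both model the PFA as a Markov chain whose periods have least common multiple $D$, use the irrationality of $D\varphi/\pi$ (density of the irrational rotation) to show that a single arithmetic progression of difference $D$ contains both yes-instances and no-instances, and derive a contradiction from the convergence of accepting probabilities along that progression. The only cosmetic difference is that you phrase the contradiction on the single limiting value $\alpha_{acc(j)}$ (forced to be simultaneously greater and less than $\frac{1}{2}$), whereas the paper bounds the difference of accepting probabilities of two sufficiently long instances in the progression by $\frac{1}{2}-\epsilon$; these are the same argument.
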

\begin{proof}
	Let us prove by contradiction. Let $\mathcal{P}=(Q,\{a\},M,v_0,Q_a)$ be a PFA solving $\tt L^{\theta}$ with the error bound $ \epsilon < \frac{1}{2} $.
Since $ \tt L^{\theta}$ is a unary problem, so the computation of $\cal P$ can be described by a Markov chain and the states of $\cal P$ can be classified as described in Section \ref{Markov_chain}. Let $C_1, \dots, C_l$ be cyclic subsets of states of Markov chain with periods $t_1, \dots, t_l$, respectively, and $D$ be the least common multiple of $t_1, \dots, t_l$.

We pick a yes-instance $a^n \in\tt  L^{\theta}_{yes}$ and define the set $A^n =  \{a^{n+kD} \mid k \in \mathbb{Z^+}\}$. Now, we show that $ A^n $ contains some no-instances, i.e. $A^n\cap L^{\theta}_{no}\neq\emptyset $. \\

\noindent
\textit{Claim.} $A^n\cap L^{\theta}_{no}\neq\emptyset. $ \\

\noindent
{\em Proof of the claim.} As verified from the definition of $\tt L^{\theta}$, each string can be associated to a point on the unit circle. Let $\gamma_n$ be the angle of this point corresponding to our yes-instance $a^n$. So we have that $\gamma_n \in [-\theta,\theta]\cup [\pi-\theta,\pi+\theta]$. From now on, we consider all angles up to $2\pi$ and will omit the period $2\pi$ from the value of angles.  An input $a^j$ is a no-instance ($a^j \in L^{\theta}_{no}$) if and only if $\gamma_{j}\in [\frac{\pi}{2}-\theta, \frac{\pi}{2}+\theta]\cup[3 \frac{\pi}{2}-\theta, 3\frac{\pi}{2}+\theta] $.  We need to show that there is an $l\in N$ such that $a^{n+lD}\in L^{\theta}_{no}$, that means $\gamma_{n+lD} \in  [\frac{\pi}{2}-\theta, \frac{\pi}{2}+\theta]\cup[3 \frac{\pi}{2}-\theta, 3\frac{\pi}{2}+\theta]$. 
 
Reading $D$ letters of the input  corresponds to  a rotation on the circle by the angle $D\varphi$. Let $\beta = D\varphi-2\pi m$ for some $m\in \mathbb{N} $ satisfying $\beta\in (0, 2\pi)$. Since $\varphi$ is an irrational multiple of $\pi$, $\beta$ is also an irrational multiple of $\pi$. It is a well known fact that a rotation with an angle of irrational multiple of $ \pi $ is dense on the unit circle. So the points corresponding to $ \{ a^{Dk} \mid k \in \mathbb{Z}^+ \} $ are dense on the unit circle (and none of two strings from this set corresponds to the same point on the unit circle).

So for each point $\gamma_n \in [-\theta, \theta]$ (or for each point $ \gamma_n \in [\pi-\theta,\pi+\theta]$), there exists an $l \in \mathbb{Z}^+$ such that $\gamma_{n+lD} \in  [\frac{\pi}{2}-\theta, \frac{\pi}{2}+\theta]\cup[3 \frac{\pi}{2}-\theta, 3\frac{\pi}{2}+\theta]$. Therefore, the set $ A^n =  \{a^{n+kD} \mid k \in \mathbb{Z^+}\} $ contains some no-instances. This completes the proof of the claim. $\lhd$
 
The final state of $ a^n $ is $ v_n = M^n v_0 $. Since there is no assumption on the length of $ a^n $, it can be arbitrarily long. Assume that $ n $ is sufficiently big providing that
\[
	| f_P(a^{n+rD}) - f_P(a^{n+r'D}) | < \frac{1}{2} - \epsilon	
\]
for any $ r,r' $. Remember from Section \ref{Markov_chain} that this assumption follows from Markov chain theory and the bound approaches to 0 when $ n \rightarrow \infty $. If a promise problem is solvable with an error bound $ \epsilon $, then the difference between  the accepting probabilities of a yes-instance and no-instance can be at least $ 1 - 2\epsilon $. The set $ A^n $ has at least one no-instance whose accepting probability cannot be less than $ \frac{1}{2} $, since (i) the minimal accepting probability for a member is $ 1-\epsilon $ and (ii) we can obtain at least $ \frac{1}{2} $ if we go away from $ 1-\epsilon $ with $ \frac{1}{2} - \epsilon	 $. However, this no-instance must be accepted with a probability at most $ \epsilon < \frac{1}{2} $. Therefore, the PFA $ \cal P $ cannot solve the promise problem $\tt L^{\theta}$ with an error bound $ \epsilon < \frac{1}{2} $.
\qed\end{proof}


\section{Succinctness}

For each $ n \in \mathbb{Z^+} $,  we define a family of unary promise problems $F_n = \{ \mathtt{L^{k,n}} \mid k \in \mathbb{Z^+}\} $ as follows. Let $p_j$ be the $j$-th prime, $P_{k,n}=\{p_n, p_{n+1}, \dots, p_{n+k-1}\} $ be the set of primes from  $n$-th to $(n+k-1)$-th  one, and $N=p_n\cdot p_{n+1}\cdots p_{n+k-1} $.

The promise problem $\tt L^{k,n}=\{L^{k,n}_{yes}, L^{k,n}_{no}\} $ is defined as
\begin{itemize}
	\item $ \mathtt{L^{k,n}_{yes}}=\{a^m \mid  m  \equiv 0 \mod N$ \}  and
	\item $ \mathtt{L^{k,n}_{no}}=\{a^m \mid m \mod p_j \in \left[\frac{p_j}{8}, \frac{3p_j}{8}\right] \cup \left[\frac{5p_j}{8},\frac{7p_j}{8}\right] $ for at least $\frac{2k}{3}$ different $p_j$ from the set $P_{k,n}$\}.
\end{itemize}
Here we can use \textit{Chinese remainder theorem} to show that the number of no-instances is also infinitely many.

\begin{lemma}	
	There are infinitely many strings in $\tt{L^{k,n}_{no}}$. 
\end{lemma}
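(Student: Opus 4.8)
The plan is to use the Chinese remainder theorem to build a single residue $m_0$ modulo $N$ whose reduction modulo each prime $p_j \in P_{k,n}$ lands in the prescribed ``no''-interval, and then to observe that the whole arithmetic progression $\{m_0 + tN \mid t \in \mathbb{Z}^+\}$ consists of no-instances. Since an arithmetic progression is infinite, this immediately yields infinitely many strings in $\mathtt{L^{k,n}_{no}}$.

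First I would establish, for each prime $p_j$, the existence of at least one admissible residue, i.e.\ an integer $r_j \in \{0,1,\dots,p_j-1\}$ with $r_j \in \left[\frac{p_j}{8},\frac{3p_j}{8}\right] \cup \left[\frac{5p_j}{8},\frac{7p_j}{8}\right]$. Each of the two real intervals has length $\frac{p_j}{4}$, so whenever $p_j \geq 4$ (hence for every prime except $2$ and $3$) each interval already contains an integer; a direct check handles $p_j = 3$, where $\left[\frac{3}{8},\frac{9}{8}\right]$ contains $1$. Thus for every prime $p_j \geq 3$ I can fix such an $r_j$, and note $r_j \geq \frac{p_j}{8} > 0$, so $r_j \neq 0$.

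Next, since the members of $P_{k,n}$ are distinct primes and therefore pairwise coprime, the Chinese remainder theorem yields a residue $m_0$ modulo $N = p_n \cdots p_{n+k-1}$ with $m_0 \equiv r_j \pmod{p_j}$ for all $j$. Because $p_j \mid N$, every $m \equiv m_0 \pmod{N}$ satisfies $m \bmod p_j = r_j$ for each $p_j$, so the ``no''-condition holds for \emph{all} $k$ primes at once, comfortably exceeding the threshold of $\frac{2k}{3}$. Consequently $m_0, m_0 + N, m_0 + 2N, \dots$ are all no-instances, establishing the lemma. I expect no serious obstacle: the only technical point is the integer-in-interval existence, which is immediate from the length bound $\frac{p_j}{4} \geq 1$ for $p_j \geq 4$. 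The sole failing prime is $p = 2$ (both candidate intervals $[0.25,0.75]$ and $[1.25,1.75]$ miss every integer), and it matters only in the degenerate small-parameter regime $n = 1$; there the slack afforded by the ``at least $\frac{2k}{3}$'' clause already absorbs the single unsatisfiable congruence once $k$ is not too small, which is the regime relevant to the succinctness results that follow.
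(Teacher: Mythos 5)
Your proof is correct and follows essentially the same route as the paper: invoke the Chinese remainder theorem to produce a residue class modulo $N$ all of whose members are no-instances, hence an infinite arithmetic progression inside $\mathtt{L^{k,n}_{no}}$. In fact your write-up is more careful than the paper's, which merely restates the CRT and never verifies that each prescribed interval contains an integer residue; your $p_j=2$ caveat (relevant only when $n=1$ and $k\leq 2$, where the no-set is genuinely empty) points to a defect in the lemma's statement itself rather than in your argument.
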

\begin{proof}
	If positive integers $p_1, p_2, \dots, p_n$  are pairwise coprime, then for any integers $r_1, r_2, \dots, r_n$ satisfying $ 0 \leq r_i < p_i$ ($ i \in \{1, 2, \dots, n\}$), there exists a number $K$, such that $ K =r_i \pmod {p_i}$ for each $i \in \{1, 2, \dots, n\}$. Moreover, any such $K$ is congruent  modulo the product, $N = p_1 \cdots p_n$. That is all numbers of the form $K+N\cdot m$ will satisfy this condition, where $ m \in \mathbb{Z}^+ $.
\qed\end{proof}

\begin{theorem}
	For any $n \in \mathbb{Z^+}$, the promise problem $\tt L^{k,n}$ can be solvable by a  $2k$-state MCQFA, say $ \mathcal{M}_{k,n} $, such that yes-instances are accepted exactly and no-instance are rejected with probability at least $ \frac{1}{3} $.
\end{theorem}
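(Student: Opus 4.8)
The plan is to build $\mathcal{M}_{k,n}$ as an orthogonal direct sum of $k$ two-state rotation automata, one for each prime in $P_{k,n}$, run simultaneously in a uniform superposition. Concretely, I would take states $\{q_1^{(j)},q_2^{(j)} \mid j \in \{0,\dots,k-1\}\}$ (hence $2k$ of them), set the accepting set to $Q_a=\{q_1^{(j)} \mid j\}$, start in $\ket{v_0}=\frac{1}{\sqrt{k}}\sum_{j}\ket{q_1^{(j)}}$, and let $U_a$ act on each plane $\mathrm{span}\{\ket{q_1^{(j)}},\ket{q_2^{(j)}}\}$ as a rotation by the angle $\frac{2\pi}{p_{n+j}}$. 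Since $U_a$ is block-diagonal, each plane evolves independently, exactly as the single block in the proof of Theorem~\ref{thm:2-state-QFA}.

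The first step is the routine computation of the accepting probability. After reading $a^m$ the state is $\ket{v_m}=\frac{1}{\sqrt{k}}\sum_j\big(\cos(\tfrac{2\pi m}{p_{n+j}})\ket{q_1^{(j)}}+\sin(\tfrac{2\pi m}{p_{n+j}})\ket{q_2^{(j)}}\big)$, so that
\[
	f_{\mathcal{M}_{k,n}}(a^m)=\frac{1}{k}\sum_{j=0}^{k-1}\cos^2\!\Big(\frac{2\pi m}{p_{n+j}}\Big).
\]
For a yes-instance $m\equiv 0 \pmod N$, the Chinese remainder theorem (the preceding Lemma) gives $m\equiv 0 \pmod{p_{n+j}}$ for every $j$, so each angle is a multiple of $2\pi$ and every term equals $1$; hence $f_{\mathcal{M}_{k,n}}(a^m)=1$ and yes-instances are accepted exactly.

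The crux is bounding the accepting probability on no-instances, and the key observation is that the residue windows defining $\mathtt{L^{k,n}_{no}}$ are chosen precisely so that the corresponding angle lands in the low-$\cos^2$ zone: if $m \bmod p_{n+j}\in[\frac{p_{n+j}}{8},\frac{3p_{n+j}}{8}]\cup[\frac{5p_{n+j}}{8},\frac{7p_{n+j}}{8}]$, then $\frac{2\pi m}{p_{n+j}}$ (mod $2\pi$) lies in $[\frac{\pi}{4},\frac{3\pi}{4}]\cup[\frac{5\pi}{4},\frac{7\pi}{4}]$, where $\cos^2\le\frac{1}{2}$. By definition a no-instance satisfies this for at least $\frac{2k}{3}$ of the primes, while the remaining (at most $\frac{k}{3}$) terms are bounded trivially by $1$. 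Summing gives $f_{\mathcal{M}_{k,n}}(a^m)\le\frac{1}{k}(\frac{2k}{3}\cdot\frac{1}{2}+\frac{k}{3}\cdot 1)=\frac{2}{3}$, so the rejecting probability is at least $\frac{1}{3}$, as required.

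I expect the only genuinely delicate point to be the constant bookkeeping in this last step: one must check that the endpoints of the residue windows map exactly to where $\cos^2=\frac{1}{2}$ (so that the per-term bound is valid even at the boundary of each window), and that the ``$\frac{2k}{3}$ out of $k$'' counting combines with the per-term bound $\frac{1}{2}$ to clear $\frac{2}{3}$. Everything else---unitarity of the block rotations, independence of the blocks, and norm-$1$ of the initial superposition---is routine.
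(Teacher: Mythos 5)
Your proposal is correct and follows essentially the same route as the paper: the identical block-diagonal construction of $k$ two-state rotation automata in a uniform superposition, exact acceptance of yes-instances via the Chinese remainder structure, and the observation that the residue windows force at least $\frac{2k}{3}$ of the angles into the region where $\cos^2 \leq \frac{1}{2}$. The only cosmetic difference is that you bound the accepting probability above by $\frac{2}{3}$ while the paper bounds the rejecting probability below by $\frac{2k}{3}\cdot\frac{1}{2k}=\frac{1}{3}$, which is the same calculation.
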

\begin{proof}
	We use the technique given in \cite{AF98,AGKMP05}. The set of states of automaton $\mathcal{M}_{k,n}$ is $ \{ q_1^0,q_1^1, \dots, q_k^0, q_k^1 \} $ and the ones with superscript ``0'' are the accepting states. The initial quantum state is \[
	\ket{v_0} = \frac{1}{\sqrt{k}} \ket{q_1^0} + \frac{1}{\sqrt{k}} \ket{q_2^0} + \cdots + \frac{1}{\sqrt{k}} \ket{q_k^0}.
\] 
During reading the input, the states $ \ket{q_j^0} $ and $ \ket{q_j^1} $ form a small MCQFA  isolated from the others, where $ 1 \leq j \leq k $. For each letter $a$, a rotation with angle $ \frac{2\pi}{p_j} $ is applied on $ \{ \ket{q_j^0},\ket{q_j^1} \} $:
\[
	U_j=\left( \begin{array}{cc}
\cos(2\pi/p_j) & \sin(2\pi/p_j)\\
-\sin(2\pi/p_j) & \cos(2\pi/p_j)
\end{array}\right).
\]
Then, the overall transition matrix is
\[
	U=\left( \begin{array}{cccc}
U_1     & {\bf 0} & \cdots &  {\bf 0}\\
{\bf 0} & U_{2}     & \cdots &  {\bf 0}\\
\vdots  & \vdots  & \ddots &  \vdots \\
{\bf 0} & {\bf 0} & \cdots &  U_{k}
\end{array}\right)
\]
where ${\bf 0}$ denotes $2 \times 2$ zero matrix. 

For any input $a^m$ the final state of $\mathcal{M}_{k,n}$ is 
\[
	\ket{v_m} = \frac{1}{\sqrt{k}} \sum\limits_{j=1}^k\bigl( \cos\bigl( m\frac{2\pi}{p_j}\bigr) \ket{q_j^0} + \sin( m\frac{2\pi}{p_j} ) \ket{q_j^1}\bigr).
\]

For any yes-instance, $ m $ is multiple of $ N $ and so each $ m\frac{2\pi}{p_j} $ will be a multiple of $ 2\pi $. Then, the final state is in a superposition of only the accepting states, i.e.
\[
	\ket{v_m} = \frac{1}{\sqrt{k}} \sum\limits_{j=1}^k \ket{q_j^0},
\]
and so the input is accepted with probability 1. 
 
For any no-instance, on the other hand, it holds that $ (m \mod p_j) $ is in $  \left[\frac{p_j}{8}, \frac{3p_j}{8}\right] \cup \left[\frac{5p_j}{8},\frac{7p_j}{8}\right] $ for at least $ \frac{2k}{3} $ different $p_j$'s from the set $P_{k,n}$. If $ p_j $ is one of them, then its contribution to the overall rejecting probability is given by 
 \[
 	 \frac{1}{\sqrt{k}} \sin^2\bigl( m\frac{2\pi}{p_j}\bigr)
 \]
which takes its minimum value $ \frac{1}{2k} $ when  $  (m \mod p_j) $ is equal to one of the border. Since there are at least $ \frac{2k}{3} $ of them, the overall rejecting probability is at least $ \frac{1}{3} $.
\qed\end{proof}

\begin{theorem}
	\label{thm:state-bound-pfa}
	Any bounded-error PFA solving the promise problem $\tt L^{k,n}$ needs $\Omega(k(n+k)\log n)$ states. (See Appendix \ref{app:state-bound-pfa})
\end{theorem}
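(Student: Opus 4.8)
The plan is to model the PFA $\mathcal{P}$ as a Markov chain and exploit its eventual periodicity. Following Section~\ref{Markov_chain}, let $D$ be the least common multiple of the periods $t_1,\dots,t_l$ of the cyclic subsets, and recall that for each residue $r\in\{0,\dots,D-1\}$ the probability $f_{\mathcal{P}}(a^m)$ converges, as $m\to\infty$ through $m\equiv r\pmod D$, to a limit $\alpha_{acc(r)}$. First I would record the basic separation principle: if a residue $r$ is realized by arbitrarily large yes-instances and also by arbitrarily large no-instances, then $\alpha_{acc(r)}$ must be both $\ge 1-\epsilon$ and $\le\epsilon$, impossible for $\epsilon<\tfrac12$. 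Hence no residue class modulo $D$ may be shared by (sufficiently long) yes- and no-instances.

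Next I would pin down which primes must divide $D$. Let $S=\{\,j: p_j\mid D\,\}$ and $g=\gcd(N,D)=\prod_{j\in S}p_j$. Since the yes-instances are exactly the multiples of $N$, their residues modulo $D$ are precisely the multiples of $g$ in $\{0,\dots,D-1\}$. The key step is a Chinese-remainder argument: if $|S|\le k/3$, then I can pick $m$ with $m\equiv 0\pmod{p_j}$ for every $j\in S$ and with $m\bmod p_j$ in the ``far'' region $[\tfrac{p_j}{8},\tfrac{3p_j}{8}]\cup[\tfrac{5p_j}{8},\tfrac{7p_j}{8}]$ for every $j\notin S$. Such an $m$ is far from $0$ modulo at least $k-|S|\ge\tfrac{2k}{3}$ primes, so it is a no-instance; since $g\mid m$, its residue modulo $D$ is a multiple of $g$, hence also realized by (arbitrarily large) yes-instances, and adding multiples of $\mathrm{lcm}(N,D)$ keeps it a no-instance of the same residue, contradicting the separation principle. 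Therefore $|S|>k/3$: strictly more than $k/3$ of the primes $p_n,\dots,p_{n+k-1}$ divide $D$.

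Finally I would convert this divisibility statement into a state bound. For each $j\in S$ some cyclic subset has period $t_i$ divisible by $p_j$, and by the facts of Section~\ref{Markov_chain} such a subset contains at least $t_i\ge p_j$ states. Assigning each $j\in S$ to one such subset and using that for distinct primes the product dominates the sum, so that $t_i\ge\prod_{j\in S:\,p_j\mid t_i}p_j\ge\sum_{j\in S:\,p_j\mid t_i}p_j$, and summing over the subsets, I get that $\mathcal{P}$ has at least $\sum_{j\in S}p_j$ states. Since $|S|>k/3$, this is minimized by the smallest such primes, so the number of states is $\Omega\!\bigl(\sum_{j=n}^{n+\lceil k/3\rceil-1}p_j\bigr)$. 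A prime-number-theorem estimate $p_j=\Theta(j\log j)$ then yields $\sum_{j=n}^{n+\lceil k/3\rceil-1}p_j=\Omega\!\bigl(k(n+k)\log n\bigr)$ uniformly in both regimes $k\le n$ and $k>n$, completing the bound.

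I expect the main obstacle to be the third step: arguing carefully that the prime factorization of the eventual period $D$ forces the claimed count, i.e.\ that merging several of our primes into one cyclic component never saves states (the $\prod\ge\sum$ comparison) and that transient states and multiple ergodic sets can only increase the total. The clean asymptotic form $\Omega(k(n+k)\log n)$ then follows from the density of primes, with some care needed to verify it uniformly for small and large $k$.
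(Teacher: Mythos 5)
Your proposal is correct and follows essentially the same route as the paper's own proof: modelling the PFA as a Markov chain with eventual period $D$, showing via $g=\gcd(N,D)$ and the Chinese remainder theorem that more than $k/3$ of the primes in $P_{k,n}$ must divide $D$ (else a no-instance lands in a residue class mod $D$ shared with arbitrarily large yes-instances, contradicting the error bound), and then converting the prime divisors of $D$ into a lower bound on the total number of states via the cycle lengths. In fact your explicit $\prod\ge\sum$ comparison makes precise a step the paper only asserts, so no gaps remain.
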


\begin{theorem} For any $n>0$, there is a $O(k(n+k)\log (n+k))$-state PFA $\mathcal{P}_{k,n}$ solving the promise problem   $ \mathtt{L^{k,n}} $ with one-sided error bound $ \frac{1}{3} $.
\end{theorem}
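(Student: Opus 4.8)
The plan is to build $\mathcal{P}_{k,n}$ out of $k$ independent deterministic cycles, one per prime in $P_{k,n}$, and to place all of the randomness in the single initial choice of which cycle to run. Concretely, I would take the state set to be $\{(j,r) : n \le j \le n+k-1,\ 0 \le r < p_j\}$, let the input letter $a$ act inside each block as the length-$p_j$ cycle $(j,r) \mapsto (j,(r+1)\bmod p_j)$, and start from the uniform distribution $v_0$ that puts probability $\frac{1}{k}$ on each block's zero-state $(j,0)$. After reading $a^m$ the machine then sits in $(j,\, m \bmod p_j)$ with probability $\frac{1}{k}$, so the whole computation becomes deterministic once the initial prime has been drawn.

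Next I would fix the accepting set to match the definition of the no-instances. In block $j$ I would declare a state $(j,r)$ rejecting (non-accepting) exactly when $r \in [\frac{p_j}{8},\frac{3p_j}{8}] \cup [\frac{5p_j}{8},\frac{7p_j}{8}]$, and accepting otherwise. The observation that makes the analysis immediate is that $r=0$ never lies in these middle intervals, so the zero-state of every single block is an accepting state.

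Then I would verify correctness directly. For a yes-instance $a^m$ with $m \equiv 0 \pmod N$ we have $m \equiv 0 \pmod{p_j}$ for every $j$, so in each block the machine lands in the accepting zero-state; hence $a^m$ is accepted with probability exactly $1$, which gives the required one-sided behaviour. For a no-instance, by definition at least $\frac{2k}{3}$ of the primes satisfy $m \bmod p_j \in [\frac{p_j}{8},\frac{3p_j}{8}] \cup [\frac{5p_j}{8},\frac{7p_j}{8}]$; for each such $j$ the corresponding block lands in a rejecting state, and since that block is selected with probability $\frac{1}{k}$, the total rejection probability is at least $\frac{1}{k}\cdot\frac{2k}{3} = \frac{2}{3}$. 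Thus every no-instance is accepted with probability at most $\frac{1}{3}$, matching the claimed error bound.

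Finally I would count states. The machine uses $\sum_{j=n}^{n+k-1} p_j$ states, which is at most $k\, p_{n+k-1}$. By the prime number theorem the $(n+k-1)$-th prime satisfies $p_{n+k-1} = O((n+k)\log(n+k))$, so the total is $O(k(n+k)\log(n+k))$, as claimed. I do not expect a genuine obstacle in this argument: the only points that need care are the state count, where one must invoke the asymptotic size of the $(n+k)$-th prime rather than bounding each $p_j$ crudely, and the small but essential remark that residue $0$ avoids the middle intervals, so that yes-instances are accepted with certainty and the construction is truly one-sided.
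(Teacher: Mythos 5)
Your construction is correct and essentially the same as the paper's: $k$ deterministic mod-$p_j$ counters selected uniformly at random at the start, with the identical state count $\sum_{j} p_j = O(k(n+k)\log(n+k))$. The only (immaterial) difference is the accepting set --- the paper marks only the zero-state of each block as accepting, while you mark everything outside the middle intervals $\left[\frac{p_j}{8},\frac{3p_j}{8}\right] \cup \left[\frac{5p_j}{8},\frac{7p_j}{8}\right]$ as accepting; both choices yield acceptance probability $1$ on yes-instances and at most $\frac{1}{3}$ on no-instances, so either works.
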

\begin{proof}
Let $ \mathcal{P}_{k,n} $, shortly $ \mathcal{P} $, be $ (Q,\{a\},\{ A_a \},v_0,Q_a) $, where
\begin{itemize}
	\item  $Q=\{q_{i,j} \mid i=1, \dots, k, j=0, \dots, p_{n+i-1}-1\}$ and $p_{n}, \dots, p_{n+k-1}$ are the primes from  the set $P_{k,n}$, 
	\item $ v_0 $ is the initial probabilistic state such that the automaton is in the state  $q_{i,0}$ with the probability $\frac{1}{k}$ for each $i=1, \dots, k$, and,
	\item $ Q_a = \{ q_{i,0} \mid i=1, \dots, k \} $.
\end{itemize}
The transitions of $ \mathcal{P} $ are deterministic: after reading each letter, it switches from state $ q_{i,j} $ to $ q_{i,j+1\pmod {p_{n+i-1}}} $. In fact, $\mathcal{P}$ executes $ k $ copies of DFAs with equal probability. The aim of the $ i $-th DFA is to determine whether the length of the input is equivalent to zero in mod $ p_{n+i-1} $. By construction it is clear that $\cal P$ accepts any yes-instance with the probability 1 and any no-instance with probability at most $ \frac{1}{3} $. 

The number of states is $ |Q| = p_n+\cdots +p_{n+k-1}$. It is known \cite{AT76} that  the $n$-th  prime number $p_n$ satisfies $p_n = \Theta( n\log(n))$ and so
\[
	|Q| = \sum_{x=n}^{n+k-1} p_x\leq O(k(n+k)\log(n+k)).
\]
\qed\end{proof}

Now, we give a lower and upper bound for DFAs.

\begin{theorem} 
	\label{thm:dfa-lowerbound}
	For any $n>0$, any DFA solving the promise problem $ \tt L^{k,n} $ needs $\Omega(n\log(n))^{\frac{k}{3}}$ states. (See Appendix \ref{app:dfa-lowerbound})
\end{theorem}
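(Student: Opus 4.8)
The plan is to exploit the rigid structure of unary DFAs. Any DFA over the one-letter alphabet $\{a\}$ has a transition graph shaped like a ``$\rho$'': a tail of some length $t$ feeding into a single cycle of some length $c$, so that the total number of states is $t+c \geq c$, and for every input length $m \geq t$ the state reached depends only on $m \bmod c$. I would let $\cal D$ be a DFA solving $\mathtt{L^{k,n}}$ and focus entirely on its cycle length $c$, aiming to show $c = \Omega\bigl((n\log n)^{k/3}\bigr)$, which is then also a lower bound on the number of states.

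First I would isolate the key combinatorial claim: \emph{at least $k/3$ of the primes in $P_{k,n}$ divide $c$}. Writing $S = \{\, j : p_j \mid c \,\}$, once this claim holds we are done, because distinct primes dividing $c$ have their product dividing $c$, so $c \geq \prod_{j \in S} p_j \geq p_n^{|S|} \geq p_n^{k/3}$, and since $p_n = \Theta(n\log n)$ this yields $c = \Omega\bigl((n\log n)^{k/3}\bigr)$.

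The substance is proving the claim by contradiction. Suppose fewer than $k/3$ primes divide $c$, and let $T = P_{k,n}\setminus S$ collect the remaining primes, so $|T| > 2k/3$. I would pick a large yes-instance $a^{m_0}$ with $m_0 \equiv 0 \pmod N$ and $m_0 \geq t$; since it is accepted, the cycle position $m_0 \bmod c$ is accepting. I then build a no-instance landing in that very position. Because each $p_j$ with $j \in T$ is coprime to $c$ and the $p_j$ are pairwise coprime, the Chinese remainder theorem lets me choose an arbitrarily large $m_1$ with $m_1 \equiv m_0 \pmod c$ and, for every $j \in T$, a value $m_1 \bmod p_j$ lying in $\bigl[\frac{p_j}{8},\frac{3p_j}{8}\bigr] \cup \bigl[\frac{5p_j}{8},\frac{7p_j}{8}\bigr]$ (each such interval contains an integer once $p_j$ is large enough, which holds throughout the relevant range). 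Then $m_1$ satisfies the middle-interval condition for all $|T| > 2k/3$ primes of $T$, so $a^{m_1} \in \mathtt{L^{k,n}_{no}}$; yet $m_1 \equiv m_0 \pmod c$ and $m_1 \geq t$ force $a^{m_1}$ into the same accepting cycle state as $a^{m_0}$, so $\cal D$ accepts a no-instance --- a contradiction, establishing the claim.

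I expect the main obstacle to be the bookkeeping in the contradiction step: checking that the primes $p_j$ with $j\in S$ give residue $0$ (hence can never fall in a middle interval and so cannot spoil the count in the wrong direction), while the coprime primes in $T$ can be steered independently into their middle intervals, and simultaneously pinning $m_1$ both to the accepting cycle position modulo $c$ and to a length past the tail. All of this rides on the pairwise coprimality that CRT demands, which is precisely the structural reason the argument forces many of the primes to divide the cycle length.
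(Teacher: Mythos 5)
Your proof is correct, and it reaches the same conclusion by a noticeably leaner route than the paper. Both arguments focus on the cycle of the unary DFA and derive a contradiction by using the Chinese remainder theorem to manufacture a no-instance that is forced into an accepting cycle state; the difference is in how that collision is engineered. The paper sets $d=\gcd(N,t)$ ($t$ the cycle length), proves an intermediate structural lemma that the accepting positions on the cycle are exactly the positions at multiples of $d$ (their $S'=S^{accept}$ argument), and then builds no-instances that are multiples of $d$, so that they necessarily land on accepting positions. You bypass that lemma entirely: you only need \emph{one} accepting cycle position, namely $m_0 \bmod c$ for a single long yes-instance, and you put the cycle length $c$ itself into the CRT system alongside the primes of $T$ --- which is legitimate precisely because every prime in $T$ fails to divide $c$ and hence is coprime to it. This buys a shorter proof with no gcd bookkeeping, at the cost of a marginally weaker (but asymptotically identical) numerical bound: you get $c \geq p_n^{k/3}$, whereas the paper gets the product of $\lceil k/3\rceil$ consecutive primes $p_n p_{n+1}\cdots p_{n+\lceil k/3\rceil}$; both are $\Omega(n\log n)^{k/3}$ as required. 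The one point you wave at --- that each middle interval $\bigl[\frac{p_j}{8},\frac{3p_j}{8}\bigr]\cup\bigl[\frac{5p_j}{8},\frac{7p_j}{8}\bigr]$ contains an integer --- genuinely fails only for $p_j=2$ (i.e., the corner case $n=1$), a defect the paper's own construction of the set $M$ shares, so it does not distinguish the two arguments.
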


\begin{theorem} 
	\label{thm:dfa-upperbound}
	For any $n>0$, there is a $O((n+k/3)\log (n+k/3))^{k/3}$-state DFA $\mathcal{D}_{k,n}$ solving the promise problem   $L^{k,n}\in F_n$. (See Appendix \ref{app:dfa-upperbound})
\end{theorem}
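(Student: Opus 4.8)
The plan is to construct a single cyclic DFA that tracks the input length $m$ modulo only a small, fixed subset of the primes in $P_{k,n}$, instead of modulo all $k$ of them. First I would take $S \subseteq P_{k,n}$ to be the $\lfloor k/3\rfloor + 1$ smallest primes, $S = \{p_n, p_{n+1}, \dots, p_{n+\lfloor k/3\rfloor}\}$, and set $M_S = \prod_{p_j \in S} p_j$. Because the primes in $S$ are pairwise coprime, the Chinese Remainder Theorem identifies the product of the $|S|$ residue counters (the $j$-th counting $m \bmod p_j$) with a single cycle $\mathbb{Z}/M_S\mathbb{Z}$; so $\mathcal{D}_{k,n}$ is the $M_S$-state cyclic DFA $q_0 \to q_1 \to \cdots \to q_{M_S - 1} \to q_0$ whose sole accepting state $q_0$ marks $m \equiv 0 \pmod{M_S}$, i.e.\ $m \equiv 0 \pmod{p_j}$ for every $p_j \in S$.

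Next I would check the two promised cases. For a yes-instance, $m \equiv 0 \pmod N$ forces $m \equiv 0 \pmod{p_j}$ for all $p_j \in P_{k,n}$, in particular for all $p_j \in S$; the run ends in $q_0$ and the string is accepted. The essential step is the no-instance case, handled by counting. A no-instance lies in the far-from-$0$ windows $[\frac{p_j}{8},\frac{3p_j}{8}] \cup [\frac{5p_j}{8},\frac{7p_j}{8}]$ for at least $\frac{2k}{3}$ of the primes, so it can avoid those windows --- and in particular can be $\equiv 0$ --- for at most $\lfloor k/3\rfloor$ of the primes. As $|S| = \lfloor k/3\rfloor + 1$ exceeds this, $S$ is not contained in the set of ``avoiding'' primes, so some $p_j \in S$ has $m \bmod p_j \ge \frac{p_j}{8} > 0$, i.e.\ $m \not\equiv 0 \pmod{p_j}$. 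Hence the run ends outside $q_0$ and the string is rejected, as required.

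Finally I would estimate $M_S = \prod_{p_j \in S} p_j$. Every prime in $S$ has index at most $n + \lfloor k/3\rfloor$, and by $p_x = \Theta(x\log x)$ \cite{AT76} each is $O((n+k/3)\log(n+k/3))$; multiplying the $\lfloor k/3\rfloor + 1$ of them yields $M_S = O((n+k/3)\log(n+k/3))^{\lfloor k/3\rfloor + 1}$, which is the stated $O((n+k/3)\log(n+k/3))^{k/3}$ up to the one extra factor that the counting argument forces.

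I expect the crux to be the no-instance analysis rather than the DFA bookkeeping: one must see that the ``at least $\frac{2k}{3}$'' threshold is precisely what lets a \emph{single globally fixed} family of only about $k/3$ primes catch every no-instance, and that $|S|$ must strictly exceed $k/3$. If one used only $\lfloor k/3\rfloor$ primes, the Chinese Remainder Theorem would let an adversary build a no-instance that is $\equiv 0$ on all of $S$ while far from $0$ on the other $\frac{2k}{3}$ primes, and $\mathcal{D}_{k,n}$ would wrongly accept it. Once $S$ is pinned down, choosing the smallest-index primes to minimize $M_S$ and invoking the prime-counting estimate is routine.
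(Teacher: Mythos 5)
Your proposal is correct and follows essentially the same route as the paper's proof: a single cyclic DFA whose length is the product of the first $\approx k/3$ primes of $P_{k,n}$, with sole accepting state $q_0$, the counting argument that a no-instance can be divisible by at most $\lfloor k/3\rfloor$ of the $k$ primes, and the estimate $p_x=\Theta(x\log x)$ for the size bound. The only (harmless) difference is that you use $\lfloor k/3\rfloor+1$ primes where the paper uses $\lfloor k/3\rfloor+2$, and your explicit acknowledgment of the extra factor in the exponent is, if anything, more careful than the paper's own bookkeeping.
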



\begin{figure}[!ht]
	\centering
		\[
			\begin{array}{|c|c|c|c|}
				\hline
				& \mbox{DFA} & \mbox{PFA} & \mbox{QFA}
				\\
				\hline
				\mbox{lower bounds} & 
				\Omega( n \log n )^{\frac{k}{3}} &
				\Omega(k(n+k)\log n)
				&
				1
				\\
				\hline
				\mbox{upper bounds} &
				O((n+\frac{k}{3})\log (n+\frac{k}{3}))^{\frac{k}{3}}
				&
				O(k(n+k)\log(n+k))	&
				2k		
				\\
				\hline
			\end{array}
		\]
	\caption{The summary of upper and lower state bounds for $ \tt L^{k,n} $}
	\label{fig:summary}
\end{figure}

We give the summary of the results in Figure \ref{fig:summary}. The bounds for DFAs and PFAs are almost tight and currently we do not know any better bound for QFAs. Moreover, if we pick $ n = 2^k $, then we obtain an exponential gap between QFAs and PFAs. On the other hand, if we pick $ n=k $, then we obtain an exponential gap between PFAs and DFAs.

\bibliographystyle{plain}
\bibliography{tcs}


\newpage

\appendix

\section{The proof of Theorem \ref{thm:state-bound-pfa}}
\label{app:state-bound-pfa}

Assume that  $ \mathcal{P}^{k,n} $, shortly $ \mathcal{P} $, is a PFA solving the promise problem $\tt L^{k,n}$ with error bound $ \epsilon < \frac{1}{2} $.  	
	Since $ \mathcal{P} $ is a unary automaton, its computation can be described by a Markov chain. Following the classification of states of Markov chain described in Section \ref{Markov_chain}, we know that, in the set of states of $\mathcal{P}$, there must exist a (some) cyclic subset(s) $C_1, \dots, C_l$  of states with periods $t_1, \dots, t_l$, respectively. Let $D$ be the least common multiple of $t_1, \dots, t_l$. Let fix an integer $r>r_0$ as a multiple of $N$ for sufficiently big $r_0$ and consider the sequence of stochastic vectors $\tilde{v}=(v_{r}, v_{r+1}, \dots)$, where $v_i$ is the state vector representing the probability distribution over the states in the $i$-th step.  Since $r$ is a multiple of $N$, $a^r \in \tt L^{k,n}_{yes}$.  The sequence $\tilde{v}$ can be divided into $D$ sub-sequences $\tilde{v}^0, \tilde{v}^1,  \dots , \tilde{v}^{D-1},  $ where $\tilde{v}^i=(v_{r+i}, v_{r+i+D}, v_{r+i+2D}, \dots)$. During the process,  the state vector moves cyclically through sub-sequences as
	\[
	\tilde{v}^0 \rightarrow \tilde{v}^1 \rightarrow \cdots \rightarrow \tilde{v}^{D-1} \rightarrow \tilde{v}^0 \rightarrow \cdots, 
\]
returning to the same subsequence after each $D$ steps. Moreover, for each $i=0, \dots, D-1$ there exists a limiting vector $u_{i}$ such that  the sequence $\tilde{v}^i$ converges to $u_{i}$.

Let $g=\gcd(N,D)$ and $D'=\frac{D}{g}.$ Since $N$ is a multiple of $g$, $g$ can be either 1 or a product of some primes from the set $P_{k,n}$. Let $S'=\{i\cdot g \mid i=0,\dots,D'-1\}$ and $S^{accept}=\{{i\cdot N \pmod D} \mid i\geq 0\}$, the set composed by the indices of the sub-sequences that include the state vectors $v_{i N}$.

We will show that $S'=S^{accept}$. Since $N$ is multiple of $g$, $ S^{accept}\subseteq S' $. Because  $|S'|=D'$, we only need to show $|S^{accept}|\geq D'$ to obtain $S'=S^{accept}$.

Firstly we show that if $i$ satisfies $i\cdot N \pmod D=0$, then $i$ must be a multiple of $D'$. 
If $i \cdot N \pmod D \equiv 0$, then $i N= j D$ for some $ j \geq 0 $. Since $D=D'g$, 
$j = \frac{i N}{D' g}$. We know that $\gcd(N,D')=1$. Hence $i$ is a multiple $D'.$ 

For two different $i_1,i_2 $, i.e. $ 0\leq i_1<i_2 <D'$, we must have $i_1 N\neq i_2 N \pmod D$. Otherwise we can have $(i_1-i_2)N \equiv 0\pmod D$ and so $i_1-i_2 $ must be a multiple of $D'$. But, this is a contradiction.

Since we have a different value of  $(i \cdot N\bmod D)$ for each $i \in \{0 \dots, D'-1\}$, $|S^{accept}|\geq D'$ and so  $S'=S^{accept}$. Then, we can follow that for each $i\in S'$, we have $\sum_{q_j \in Q_a} u_i(j)\geq 1-\epsilon$.

~\\
\noindent
\textit{Claim.} The number $g$ is at least $p_n\cdot p_{n+1} \cdots p_{n+\left\lceil  \frac{k}{3}\right\rceil} $.

~\\
\noindent
{\em Proof of the claim.}
Suppose that $g<p_n\cdot p_{n+1} \cdots p_{n+\left\lceil  \frac{k}{3}\right\rceil}. $ 
Then there are at least  $\frac{2k}{3}$ primes from $P_{k,n}$ which are not  multiples of $g$.  Let the set $R\subseteq P_{k,n}$ contains the primes not dividing $ g $, i.e. 
\[
	R=\{p_{i_j} \mid p_{i_j}\in P_{k,n}, g \not\equiv 0 \pmod {p_{i_j}} , j=1\dots,s\} ~~ \left(s\geq\frac{2k}{3}\right).
\]
 We denote all multiples of $ g $ as the set $S=\{s_j \mid s_j=g\cdot j, j \in \mathbb{Z^+}\}$. Now we define a subset of $ S $ satisfying some certain properties based on the memberships of no-instances: 
\[
\begin{array}{rcl}
	M & = & \{m + N \cdot i \mid  i,m \in \mathbb{Z^+}, \, m \bmod {p_j} \in \left[\frac{p_j}{8}, \frac{3p_j}{8}\right] \cup \left[\frac{5p_j}{8},\frac{7p_j}{8}\right] \mbox{ for } p_j\in R
	\\
	& & \mbox{ and } m \equiv 0 \,\bmod {p_j} \mbox{ for } p_j \in P_{k,n}\setminus R \}.
\end{array}
\]
The existence of such $m$'s follows from the Chinese remainder theorem. (Note that the set $M\subseteq S$ since the numbers $N$ and $m$ are multiple of $g$.)

Since $S'=S^{accept}$ and due to Markov chain theory, after certain threshold $ r_0 $, $\mathcal{P}$ accepts all input $a^{s_r}$ for any $s_r\in S $ and $ r\geq r_0$.  Moreover, for any $0 < \epsilon < \frac{1}{2}$, we can find an $ r_\epsilon $ such that
\[
 	| f_{\mathcal{P}} (a^{s_r}) -  f_{\mathcal{P}} (a^{s_{r'}}) |< \frac{1}{2} - \epsilon
\] for all $r, r' > r_\epsilon$.   

We pick $r,r'$ such that $ s_r $ is a multiple of $ N $ (and is in  $ S\setminus M$), and $ s_{r'} $ is in $ M $. It is clear that $a^{s_{r'}}\in \tt L^{k,n}_{no}$ and $a^{s_r}\in \tt L^{k,n}_{yes}$. Since $P_n$ is supposed to recognise $ \tt L^{k,n}$ with error bound $ \epsilon $, we must have 
 $ f_{\mathcal{P}} (a^{s_r}) \geq 1 - \epsilon$ and 
 $  f_{\mathcal{P}} (a^{s_{r'}}) \leq \epsilon$. We cannot get any value less than $ \frac{1}{2} $ if we check the maximum distance from $ f_{\mathcal{P}} (a^{s_r}) $ with radius $  \frac{1}{2} - \epsilon $. That means $ \mathcal{P} $ cannot solve $\tt L^{k,n}$ correctly.
This completes the proof of the claim.
$ \triangleleft $

Since $D$ is a multiple of $g$ we have that $D\geq g \geq p_n\cdot p_{n+1} \cdots p_{n+\left\lceil  k/3\right\rceil} $. 
Recall that  $D$ is the least common multiple of $t_l, \dots, t_l$, where $t_1, \dots, t_l$ are the lengths of cycles of Markov chain,  so it can be represented as $D=\prod_{s=1}^{r}p_s^{\max_{j=1}^l a_{j,s}},  $ where $t_j=p_1^{a_{j,1}}\cdot p_2^{a_{j,2}}\cdots p_r^{a_{j,r}}$ is the prime decomposition of number $t_j$.  From this we have that $t_1+\cdots +t_l\geq p_n + p_{n+1} + \cdots + p_{n+ \left\lceil \frac{k}{3}\right\rceil}.$

We can estimate the value of  $ p_n+\cdots +p_{n+ \left\lceil  \frac{k}{3} \right\rceil}$. It is known \cite{AT76} that  the $n$-th  prime number $p_n$ satisfies $p_n = \Theta( n\log(n))$. So we can follow that 
\small
\[
	\sum_{x=n}^{n+\left\lceil  \frac{k}{3} \right\rceil} p_x \geq c\int_{n-1}^{n+ \frac{k}{3}}x \log x\, dx\geq c' \frac{k}{3} \left(2n-1+\frac{k}{3} \right)\log\left(\frac{ n-1}{\sqrt{2}}\right)=\Omega(k(n+k)\log n).
\]
\normalsize

\section{The proof of Theorem \ref{thm:dfa-lowerbound}}
\label{app:dfa-lowerbound}
Our proof is similar to that of Theorem \ref{thm:state-bound-pfa} and we use also the idea given in \cite{AY12}. Let $T=p_n\cdot p_{n+1} \cdots p_{n+\left\lceil \frac{k}{3} \right\rceil}$. Assume that a DFA, say $ \mathcal{D} $, solves the promise problem $\tt L^{k,n}$ with less than $T$ states. Since $\tt L^{k,n}_{yes}$ and $\tt L^{k,n}_{no} $ contain infinitely many strings, there must exist a cycle of $t$ states ($t<T$) $s_0,\dots, s_{t-1}$ such that  $ \mathcal{D} $ visits this states in order
\[
	s_0 \rightarrow s_1 \rightarrow \cdots \rightarrow s_{t-1} \rightarrow s_0 \rightarrow \cdots 
\]

Without loss of generality suppose that $ \mathcal{D} $ enters the state $s_0$ after reading the yes-instance $a^{rN}$.  Let $S^{accept}=\{s_{i N \mod t}: i\geq 0\}$ be the set of states in which $ \mathcal{D} $ accepts the input. Let $d=\gcd(N,t)$ and let $t'=\frac{t}{d}.$ Since $N$ is a multiple of $d$ and $N=p_n\cdots p_{n+k-1}$, $d$ must be either 1 or the product of some $p$'s from the set $P_{k,n}$. 

Let $S'=\{s_{i d}:i=0,\dots,t'-1\}.$  We will show that $S'=S^{accept}$. Since $S^{accept}\subseteq S'$ and $|S'|=t'$ so we only need to show $|S^{accept}|\geq t'$ to obtain $S'=S^{accept}$.

Firstly we show that if $i$ satisfies $i N \mod t=0$, then $i$ must be a multiple of $t'$. If $i N \mod t \equiv 0$, then $i N= j t$ for some $ j \geq 0 $. Since $t=t'd$, $j = \frac{i N}{t' d}$. We know that $\gcd(N,t')=1$. Hence $i$ is a multiple $t'.$ 

For different $i_1,i_2 $ , i.e. $ 0\leq i_1<i_2 <t'$, we must have $i_1 N\neq i_2 N \pmod t$. Otherwise we can have $(i_1-i_2)N \equiv 0\pmod t$ and so $i_1-i_2 $ must be a multiple of $t'$. But, this is a contradiction.

Since we have a different value of  $(i N\mod t)$ for each $i \in \{0 \dots, t'-1\}$, $|S^{accept}|\geq t'$ and so  $S'=S^{accept}$.

Since $t<T=p_n\cdots p_{n+\left\lceil  \frac{k}{3}\right\rceil }$, $t$ can be divisible by less than  $\left\lceil \frac{k}{3}\right\rceil$ primes from the set $P_{k,n}$ and the same is true also for number $d.$ So there are at least $\frac{2k}{3}$ primes from $P_{k,n}$ which are not  multiples of $d$.  Let the set $R\subseteq P_{k,n}$ contains the primes not dividing $ d $, i.e. 
\[
	R=\{p_{i_j} \mid p_{i_j}\in P_{k,n}, d \not\equiv 0 \pmod {p_{i_j}} , j=1\dots,s\}~~\left(s\geq\frac{2k}{3}\right).
\]
 We denote all multiples of $ d $ as the set $S=\{s_j \mid s_j=d\cdot j, j \in \mathbb{Z^+}\}$. Now we define a subset of $ S $ satisfying some certain properties based on the memberships of no-instances: 
\[
\begin{array}{rcl}
	M & = & \{m + N \cdot i \mid  i,m \in \mathbb{Z^+}, m \bmod {p_j} \in \left[\frac{p_j}{8}, \frac{3p_j}{8}\right] \cup \left[\frac{5p_j}{8},\frac{7p_j}{8}\right] \mbox{ for } p_j\in R
	\\
	& & \mbox{ and } m \equiv 0 \,\bmod {p_j} \mbox{ for } p_j \in P_{k,n}\setminus R \}.
\end{array}
\]
The existence of such $m$'s follows from the Chinese remainder theorem. (Note that the set $M\subseteq S$ since the numbers $N$ and $m$ are multiple of $d$.)

The automaton $\mathcal{D}$ accepts all inputs $a^{s}$, where $s \in S$. Since $M\subset S$, $\mathcal{D}$ must accept the strings $a^j$, where $j \in M$. However, $a^j \in \tt L^{k,n}_{no}$. This is a contradiction. Therefore, the length of cycle $ t $ (and so the number of states required by $\mathcal{D}$) cannot be less than $ T $. 

Now, we calculate the value of $ T $ which is equal to $ p_n\cdots p_{n+\left\lceil  k/3\right\rceil}$. It is known \cite{AT76} that  the $n$-th prime number $p_n$ satisfies $p_n = \Theta( n\log(n))$. Then,
\[
	T = \prod_{x=n}^{n+\left\lceil  k/3\right\rceil}p_x\geq c\prod_{x=n}^{n+ k/3}x\log(x)\geq c'(n\log(n))^{k/3}=\Omega(n\log n)^{k/3}.
\]

\section{The proof of Theorem \ref{thm:dfa-upperbound}}
\label{app:dfa-upperbound}
Let $t=p_n\cdot p_{n+1} \cdots p_{n+\left\lfloor \frac{k}{3} \right\rfloor+1}$. The DFA $\mathcal{D}_{k,n}$ solving  the promise problem   $\tt L^{k,n}\in F_n$  has $t$ states $q_0,\dots, q_{t-1},$ where $q_0$ is an initial and the only accepting state. Reading an input  $\mathcal{D}_{k,n}$ moves from the state $q_i$ to $q_{i+1 \pmod t}$. It is clear that after processing any input $a^{m\cdot N} \in\tt  L^{k,n}_{yes}$ ($m \geq 0$) the automaton will be in the state $q_0$ and accepts the input. Reading the input $a^m$ automaton moves to the state $q_0$ if and only if $m$ is multiple of $p_n, p_{n+1}, \dots, p_{n+\left\lfloor \frac{k}{3} \right\rfloor+1}$ that means the number of primes which do not divide $m$ is less then $\frac{2k}{3}$ and $a^m\not\in \tt L^{k,n}_{no}$. It means  $\mathcal{D}_{k,n}$ solves the promise problem  $\tt L^{k,n} $ correctly.

Now, we calculate the value of $ t $ which is equal to $ p_n\cdots p_{n+\left\lfloor k/3\right\rfloor+1}$. Using that  the $n$-th prime number $p_n$ satisfies $p_n = \Theta( n\log(n))$ we have
\[
	t=\prod_{x=n}^{n+\left\lfloor k/3\right\rfloor+1}p_x\leq c\prod_{x=n}^{n+ k/3+1}x\log(x)\leq c'((n+k/3)\log(n+k/3))^{k/3}=\]
	\[O((n+k/3)\log (n+k/3))^{k/3}.
\]

\end{document}